\documentclass{article}

\usepackage[preprint]{neurips_2026}

\usepackage[utf8]{inputenc}
\usepackage[T1]{fontenc}
\usepackage{lmodern}
\usepackage{hyperref}
\hypersetup{hidelinks}
\usepackage{url}
\usepackage{booktabs}
\usepackage{amsfonts}
\usepackage{amsmath}
\usepackage{amssymb}
\usepackage{amsthm}
\usepackage{nicefrac}
\usepackage{microtype}
\usepackage{graphicx}
\usepackage{xcolor}
\usepackage{algorithm}
\usepackage{algorithmic}
\usepackage{multirow}
\usepackage{subcaption}

\newtheorem{theorem}{Theorem}
\newtheorem{assumption}{Assumption}
\newtheorem{corollary}{Corollary}

\title{LLaDA-TTS: Unifying\\Speech Synthesis and Zero-Shot Editing\\via Masked Diffusion Modeling}

\author{%
  Xiaoyu Fan\thanks{Work done during an internship at Bairong.} \quad
  Huizhi Xie\thanks{Corresponding author.} \quad
  Wei Zou \quad
  Yunzhang Chen \\[0.5em]
  BRVoice Team, Bairong, Inc., China \\
  \texttt{\{xiaoyu.fan, huizhi.xie, wei.zou, yunzhang.chen\}@brgroup.com}
}

\begin{document}

\maketitle

\begin{abstract}
Large language model (LLM)-based text-to-speech (TTS) systems achieve remarkable naturalness via autoregressive (AR) decoding, but require $N$ sequential steps to generate $N$ speech tokens. We present \textbf{LLaDA-TTS}, which replaces the AR LLM with a masked diffusion model that completes generation in a fixed number of parallel steps, decoupling inference latency from sequence length. Remarkably, using only 50 hours of fine-tuning data, we successfully transfer a pretrained AR checkpoint to the masked diffusion paradigm via bidirectional attention. At 64 steps, LLaDA-TTS achieves 0.98\% CER (zh) and 1.96\% WER (en) on Seed-TTS-Eval, matching the original CosyVoice 3 baseline performance while delivering a $2\times$ LLM-stage speedup---a notable acceleration achieved despite the absence of KV cache, an optimization the AR baseline heavily relies on. Beyond acceleration, the bidirectional architecture naturally enables \emph{zero-shot speech editing}---including word-level insertion, deletion, and substitution---without any additional training. Theoretically, we prove that AR-pretrained weights are near-optimal for bidirectional masked prediction under the locality property of acoustic tokens, explaining this rapid convergence. This general method modifies only the attention mask and objective, applying seamlessly to any LLM-based AR TTS system. Code and audio samples will be available at \url{https://deft-piroshki-b652b5.netlify.app/}.
\end{abstract}

\section{Introduction}
\label{sec:intro}

The convergence of large language models (LLMs) and neural speech synthesis has produced TTS systems that rival human speech in naturalness and speaker fidelity~\citep{cosyvoice3,seedtts,valle}. These systems follow a pipeline where an LLM autoregressively generates discrete speech tokens, and a second-stage model converts them to waveforms. While highly effective, the AR bottleneck remains fundamental: generating a 10-second utterance at 25\,tokens/s requires ${\sim}250$ sequential forward passes, making latency scale linearly with output length.

\textbf{Masked discrete diffusion models}~\citep{llada,dream,mdlm} offer a principled alternative. Starting from a fully masked sequence, these models iteratively predict and reveal tokens in parallel, decoupling inference cost from output length. LLaDA~\citep{llada} established the $1/t$-weighted cross-entropy objective as a valid ELBO, and Dream~\citep{dream} showed that initializing from AR weights via \emph{label shift} enables efficient convergence.

We propose \textbf{LLaDA-TTS}, applying masked discrete diffusion to the speech token generation stage of LLM-based TTS. Our approach replaces only the LLM's causal attention mask and training objective---the tokenizer, prompt format, and acoustic model are untouched. This simplicity means the method applies to any LLM-based AR TTS system using discrete speech tokens (e.g., VALL-E~\citep{valle}, Seed-TTS~\citep{seedtts}, Spark-TTS~\citep{sparktts}), not just the CosyVoice\,3~\citep{cosyvoice3} backbone used in our experiments.

Beyond acceleration, we discover two emergent phenomena in LLaDA-TTS. First, despite fully bidirectional attention, the unmasking order exhibits a \emph{predominantly left-to-right} progression, indicating that sequential priors from AR pretraining persist in the diffusion regime---yet with confidence-based deviations that exploit bidirectional context for improved coherence. Second, specific attention heads in LLaDA-TTS's Transformer develop precise \emph{monotonic alignment} between text and speech positions---an attention pattern impossible under causal masking---enabling zero-shot speech editing via selective masking and regeneration, without additional training. We ground these findings theoretically by proving that, under a physically motivated $\varepsilon$-forward dependence of acoustic tokens, AR-pretrained predictions are provably near-optimal for bidirectional masked prediction (Theorem~\ref{thm:kl_bound}), explaining both the effectiveness of initialization and the emergent left-to-right dynamics.

Our contributions:
\begin{enumerate}
    \item We adapt a pretrained AR TTS model into a masked diffusion decoder. The method is architecture-agnostic and applicable to any LLM-based TTS system.

    \item On Seed-TTS-Eval, LLaDA-TTS achieves 0.98\% CER (zh) and 1.96\% WER (en) at 64 steps, matching the accuracy of the CosyVoice 3 baseline while delivering a $2\times$ computational speedup---an efficiency gained despite lacking the KV cache advantage inherent to AR models.

    \item We discover that specific attention heads in LLaDA-TTS precisely capture text-to-speech alignment, and that AR weight initialization naturally induces an approximate left-to-right decoding progression during the unmasking process.

    \item Empowered by the bidirectional attention mechanism, LLaDA-TTS enables a zero-shot speech editing pipeline---performing word-level insertion, deletion, and substitution---without requiring any additional training.
\end{enumerate}

\section{Related Work}
\label{sec:related}

\paragraph{LLM-based TTS.}
VALL-E~\citep{valle} pioneered generating discrete speech tokens with LLMs for zero-shot TTS. Seed-TTS~\citep{seedtts} scaled this to industrial quality; the CosyVoice series~\citep{cosyvoice1,cosyvoice,cosyvoice3} progressively improved via supervised tokenization, streaming, and scaled LLMs; F5-TTS~\citep{f5tts} used continuous flow matching; and Spark-TTS~\citep{sparktts} proposed single-stream decoupled tokens. All inherit the sequential AR bottleneck.

\paragraph{Discrete diffusion for language.}
D3PM~\citep{austin2021d3pm} established the theoretical framework for discrete diffusion. MDLM~\citep{mdlm} showed the absorbing-state (masked) case yields strong perplexity. LLaDA~\citep{llada} provided the $1/t$-weighted ELBO and matched AR models at 8B scale. Dream~\citep{dream} introduced AR initialization via label shift and matched LLaDA with 5$\times$ less data, demonstrating strong planning capabilities. These advances motivate our application to TTS.

\paragraph{NAR speech generation and editing.}
MaskGCT~\citep{maskgct} applied masked generation to TTS in a two-stage pipeline with MaskGIT-style~\citep{maskgit} decoding. Our work differs in using the ELBO-derived $1/t$ objective, single-stage generation, and AR initialization. For speech editing, VoiceCraft~\citep{voicecraft} requires specialized training, while LLaDA-TTS performs editing as a zero-cost byproduct of its bidirectional architecture.

\paragraph{Theoretical analysis of AR-diffusion connections.}
The relationship between autoregressive and diffusion models has been explored in continuous~\citep{austin2021d3pm} and discrete~\citep{llada} settings. The ELBO framework~\citep{mdlm,llada} provides a principled training objective, and Dream~\citep{dream} showed empirically that AR initialization is effective. However, a theoretical explanation for \emph{why} AR initialization works particularly well for specific modalities has been lacking. Our analysis fills this gap for speech: we prove that the temporal locality of acoustic tokens makes AR predictions near-optimal for bidirectional masked prediction, providing the first formal justification for AR-to-diffusion transfer in TTS.

\section{Method}
\label{sec:method}

We retain the standard LLM-based TTS pipeline---supervised speech tokenization, LLM-based token generation, and flow matching vocoder---but replace the causal AR LLM with a bidirectional masked diffusion model. Figure~\ref{fig:architecture} illustrates the architecture.

\begin{figure}[t]
    \centering
    \includegraphics[width=0.95\textwidth]{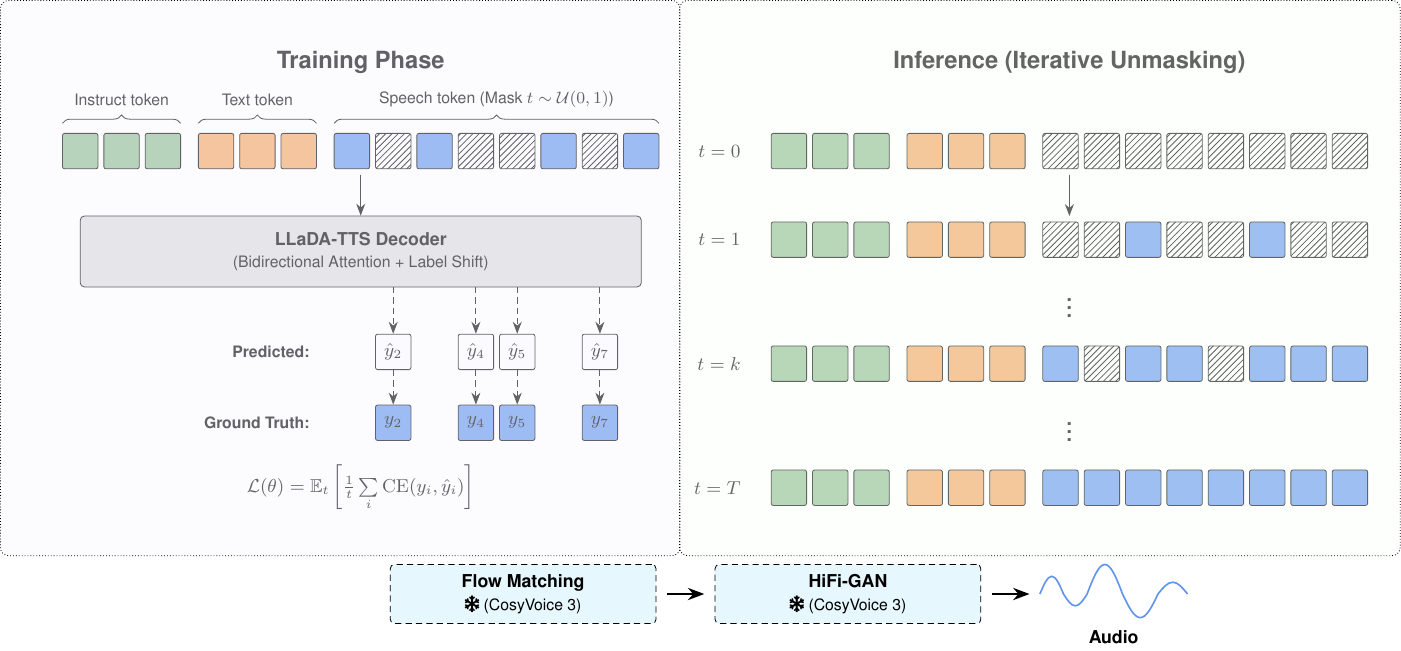}
    \caption{LLaDA-TTS architecture overview. A bidirectional Transformer (Qwen2) iteratively unmasks speech tokens in $T$ steps. The text encoder, sequence format, and downstream flow matching vocoder remain identical to the AR baseline.}
    \label{fig:architecture}
\end{figure}

\subsection{Architecture Adaptation}

\paragraph{Bidirectional attention.} We replace the causal mask with full bidirectional attention, allowing every token to attend to all others (only padding is masked). This enables each masked position to condition on both preceding and following context.

\paragraph{Sequence format.} The input sequence retains the standard structure:
\([\texttt{SOS},\allowbreak\, \mathbf{t}_\text{instruct},\allowbreak\, \mathbf{t}_\text{text},\allowbreak\, \texttt{SEP},\allowbreak\, \mathbf{s}_\text{prompt},\allowbreak\, \widetilde{\mathbf{s}},\allowbreak\, \texttt{EOS}]\),
where~$\widetilde{\mathbf{s}}$ denotes partially masked target speech tokens. Only target tokens participate in masking; text, prompt, and special tokens are always visible.

\paragraph{Label shift.} Following Dream~\citep{dream}, we preserve the AR convention where hidden state $i$ predicts token $i{+}1$, implemented by shifting logits: $\hat{\mathbf{y}}_i = \text{logits}_{i-1}$. This enables direct initialization from AR weights without retraining the output projection---the critical bridge for knowledge transfer.

\subsection{Training}

\paragraph{Initialization.} All parameters are loaded from a pretrained AR checkpoint (Qwen2-0.5B backbone), including the Transformer, speech embedding, and decoder. Only the new \texttt{[MASK]} token embedding is randomly initialized. This transfers linguistic and acoustic knowledge from AR pretraining, enabling fast convergence~\citep{dream}.

\paragraph{Masked diffusion objective.} For each sample, we sample $t \sim \mathcal{U}(0,1)$, independently mask each target speech token with probability $t$, and compute the $1/t$-weighted cross-entropy loss only on masked positions:
\begin{equation}
    \mathcal{L}(\theta) = \mathbb{E}_{t} \left[ \frac{1}{t} \sum_{i:\, s_t^i = \texttt{[MASK]}} -\log p_\theta(s_0^i \mid \mathbf{s}_t) \right]
    \label{eq:loss}
\end{equation}
The $1/t$ weighting upweights predictions near completion where each remaining token carries more information, corresponding to the ELBO derivation in LLaDA~\citep{llada}.

\paragraph{Training data.} We train on 6{,}000\,h of the Emilia dataset~\citep{emilia} (58\% Chinese, 42\% English; 2.5\,M utterances). Training uses 7$\times$A100 GPUs with DDP, Adam ($\text{lr}=10^{-5}$), gradient clipping at 5.0, and FP16 mixed precision.

\subsection{Why AR Initialization Works: A Theoretical Perspective}
\label{sec:theory}

The strong empirical performance of AR initialization (Table~\ref{tab:ablation}) admits a theoretical explanation rooted in the temporal structure of acoustic tokens.

\begin{assumption}[$\varepsilon$-Forward Dependence]
\label{ass:eps_forward}
A discrete acoustic token sequence $X = (x_1, \ldots, x_N)$ at frame rate $f$ satisfies $\varepsilon$-forward dependence if for all positions $i$ and any subset $S \subseteq \{1, \ldots, i{-}1\}$ of observed past positions:
\begin{equation}
I(x_i;\, X_{>i} \mid X_S) \leq \varepsilon
\label{eq:eps_forward}
\end{equation}
\end{assumption}

\noindent At 25\,Hz (40\,ms/frame), speech tokens satisfy this with small $\varepsilon$: articulatory parameters (tongue position, lip aperture, glottal state) evolve on timescales of 50--100\,ms~\citep{coverinfo}, so consecutive frames are highly redundant and the causal direction dominates information flow. The condition requires the bound for \emph{any} partial past observation $X_S$, which is physically natural---the information that future tokens carry about the current frame is inherently limited by the causal dynamics of speech production, regardless of which past frames are observed.

\begin{theorem}[Bounded Suboptimality of AR Initialization]
\label{thm:kl_bound}
Under Assumption~\ref{ass:eps_forward}, let $\tilde{X}$ be a partial masking of $X$ where each token is independently revealed with probability $1{-}\tau$. For any masked position $i$, the expected KL divergence between the optimal bidirectional predictor and the left-context-only (AR) predictor satisfies:
\begin{equation}
\mathbb{E}_{\tilde{X}}\!\left[ D_{\mathrm{KL}}\!\left( P(x_i \mid \tilde{X}_{<i}, \tilde{X}_{>i}) \;\big\|\; P(x_i \mid \tilde{X}_{<i}) \right) \right] \leq \varepsilon
\label{eq:kl_bound}
\end{equation}
\end{theorem}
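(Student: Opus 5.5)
The plan is to recognize the expected KL on the left of \eqref{eq:kl_bound} as a conditional mutual information, and then bound that quantity by the one controlled in Assumption~\ref{ass:eps_forward}.

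\emph{Step 1: rewrite the KL as conditional mutual information.} Fix a masked position $i$. Let $S = \{j<i : j \text{ revealed}\}$ and $R = \{j>i : j \text{ revealed}\}$, so that $\tilde{X}_{<i} = X_S$ and $\tilde{X}_{>i} = X_R$. Condition on the realized index sets $(S,R)$. These are drawn independently of the token values $X$, because masking is independent Bernoulli with probability $\tau$. Averaging over $X$ then gives the standard identity
\[
\mathbb{E}_{X}\!\left[ D_{\mathrm{KL}}\!\left( P(x_i \mid X_S, X_R) \,\big\|\, P(x_i \mid X_S) \right) \right] = I(x_i;\, X_R \mid X_S).
\]
The outer expectation over $\tilde{X}$ is therefore the average of $I(x_i; X_R \mid X_S)$ over the random pair $(S,R)$. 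Here it matters that the mask pattern is independent of $X$: conditioning on the pattern does not change the joint law of the tokens.

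\emph{Step 2: drop the masking on the right.} Since $R \subseteq \{i{+}1,\dots,N\}$, $X_R$ is a function of $X_{>i}$. The chain rule gives
\[
I(x_i; X_{>i} \mid X_S) = I(x_i; X_R \mid X_S) + I(x_i; X_{>i\setminus R} \mid X_S, X_R),
\]
and the last term is nonnegative. Hence $I(x_i; X_R \mid X_S) \le I(x_i; X_{>i} \mid X_S)$, which is just the data-processing inequality for conditional mutual information.

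\emph{Step 3: apply the assumption and average.} $S$ is a subset of $\{1,\dots,i-1\}$, so Assumption~\ref{ass:eps_forward} gives $I(x_i; X_{>i} \mid X_S) \le \varepsilon$ for every realization of $S$. This bound is uniform in $(S,R)$, so it survives averaging over the Bernoulli masks and yields \eqref{eq:kl_bound}. The bound does not depend on $\tau$.

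\emph{Main obstacle.} The delicate point is Step~1. It requires that the expectation in \eqref{eq:kl_bound} is over the joint law of $(X,\text{mask})$, and that the predictors are the true conditionals of the data distribution, so that the KL-to-MI identity holds exactly. I would state explicitly that the mask is independent of $X$ and conditioned on as an index set. With that in place, the rest is routine.
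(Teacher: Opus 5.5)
Your proposal is correct and follows essentially the same route as the paper's sketch: identify the expected KL with a conditional mutual information, use $M \perp X$ to reduce to a fixed revealed pattern, apply data processing to pass from $X_R$ to $X_{>i}$, invoke Assumption~\ref{ass:eps_forward}, and average over masks. The only difference is cosmetic: you condition on the mask pattern up front, whereas the paper first writes the quantity as $I(x_i;\tilde{X}_{>i}\mid\tilde{X}_{<i})$ and then removes the mask-pattern term by independence.
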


\noindent\textit{Proof sketch.} The expected KL equals the conditional mutual information $I(x_i;\, \tilde{X}_{>i} \mid \tilde{X}_{<i})$. Conditioning on the independent masking pattern $M$ and decomposing, the term involving $M$ vanishes by $M \perp X$. For each fixed mask, the data processing inequality gives $I(x_i;\, X_{R_{>i}} \mid X_{R_{<i}}) \leq I(x_i;\, X_{>i} \mid X_{R_{<i}}) \leq \varepsilon$, where $R_{<i}, R_{>i}$ are the revealed positions. Averaging over masks preserves the bound. The full derivation is provided in the supplementary material. \hfill$\square$

\smallskip

Theorem~\ref{thm:kl_bound} implies that the optimal AR predictor already captures all but $\varepsilon$ bits of the per-token predictive distribution at \emph{every} diffusion step. This explains the dramatic convergence advantage of AR initialization: at matched compute, AR-initialized models achieve 0.98\% CER versus 45.27\% from scratch (Table~\ref{tab:ablation}), a gap directly accounted for by the $\varepsilon$-bounded suboptimality of AR predictors in the bidirectional regime.

\begin{corollary}[Emergence of Left-to-Right Unmasking]
\label{cor:wavefront}
During iterative unmasking, positions with smaller index $i$ have access to the always-visible text and prompt prefix, which provides strong left conditioning. By Theorem~\ref{thm:kl_bound}, the left-only prediction is near-optimal for these positions, yielding high confidence scores---and thus early unmasking. This cascading effect produces the left-to-right wavefront observed in Figure~\ref{fig:unmasking}.
\end{corollary}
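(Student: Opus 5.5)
We would first turn the corollary into a precise statement, since as written it is a qualitative claim about decoding dynamics. Fix a diffusion step with revealed set $R$, which always contains the text and prompt prefix. Define the confidence of a masked position $i$ as $c_i = \max_{v} P(x_i = v \mid \tilde{X})$. Let $g_i = i - \max\{j < i : j \in R\}$ be the distance from $i$ back to the nearest revealed token on its left. The target statement would be the following: in expectation over the data, and up to an error controlled by $\varepsilon$, the position with the largest confidence is the one with $g_i = 1$, namely the first masked position after the contiguous revealed prefix. Induction over unmasking steps then turns this into the wavefront: the revealed prefix keeps growing, and each step the frontier moves right.

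\emph{Step 1: replace the bidirectional predictor by the left-only one.} Theorem~\ref{thm:kl_bound} bounds the expected KL between $P(x_i \mid \tilde{X}_{<i}, \tilde{X}_{>i})$ and $P(x_i \mid \tilde{X}_{<i})$ by $\varepsilon$. Pinsker's inequality turns this into an expected total-variation bound of $\sqrt{\varepsilon/2}$. Since $\max_v$ is $1$-Lipschitz in total variation, the bidirectional confidence $c_i$ and the left-only confidence $c_i^{\mathrm{L}} = \max_v P(x_i = v \mid \tilde{X}_{<i})$ differ by at most $\sqrt{\varepsilon/2}$ in expectation. Markov's inequality upgrades this to a high-probability statement. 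After this step we only need to rank the left-only confidences $c_i^{\mathrm{L}}$.

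\emph{Step 2: rank the left-only confidences by gap.} For the first masked position after the revealed prefix ($g_i = 1$), the conditioning contains the immediately preceding frame together with all text and prompt tokens. For a later masked position, the conditioning is that same revealed prefix, with the masked frames in between missing. By the data processing inequality and the chain rule, $H(x_i \mid \tilde{X}_{<i})$ is at least the entropy one would have if the gap were filled in. We would then need a quantitative statement that this conditional entropy grows with $g_i$.

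This is the main obstacle. Assumption~\ref{ass:eps_forward} only limits \emph{backward} information from the future; it says nothing about how predictability decays with distance into the past. We would therefore add an explicit monotone-locality hypothesis, for example $H(x_i \mid X_{\le i-g}) \ge H(x_i \mid X_{\le i-1}) + \delta(g)$ with $\delta$ increasing and $\delta(1) = 0$. The physical motivation would be the same one already given for the 50--100\,ms articulatory timescale.

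\emph{Step 3: pass from entropy to confidence.} A gap in entropy does not automatically give a gap in max-probability. We would use the two standard inequalities: $-\log c_i \le H$ (min-entropy is at most Shannon entropy), and a Fano-type bound $H \le h(1-c_i) + (1-c_i)\log(|V|-1)$. Together these show that a sufficiently large entropy gap $\delta(g)$ forces $c^{\mathrm{L}}_{\text{frontier}} > c^{\mathrm{L}}_{i} + 2\sqrt{\varepsilon/2}$ for positions with $g_i \ge 2$. Combined with Step 1, the frontier position wins the confidence ranking.

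Chaining Steps 1--3 step by step gives the left-to-right cascade. This also suggests the correct conclusion is only \emph{predominantly} left-to-right: the ordering can flip whenever $\delta(g)$ is small compared with $\sqrt{\varepsilon}$. That matches the confidence-based deviations the paper reports.
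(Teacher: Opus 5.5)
\textbf{The main gap: your added locality hypothesis contradicts Assumption~\ref{ass:eps_forward}.}
You are right that the ordering does not follow from Theorem~\ref{thm:kl_bound}. The paper offers only the heuristic in the statement, repeated in Section~\ref{sec:exp_emergent}, and that heuristic fails on its own terms:
the $\varepsilon$-bound is uniform in $i$; closeness in KL says nothing about sharpness; text and prompt are visible to every position under bidirectional attention; and i.i.d.\ tokens satisfy Assumption~\ref{ass:eps_forward} with $\varepsilon=0$ yet give every position the same confidence.

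The problem is in your repair. You claim Assumption~\ref{ass:eps_forward} ``says nothing about how predictability decays with distance into the past.'' That is false, because mutual information is symmetric. For $j<i$, taking $S=\{1,\dots,j-1\}$ gives $I(x_i;x_j\mid X_{<j})\le I(x_j;X_{>j}\mid X_{<j})\le\varepsilon$. The chain rule then yields
\[
H(x_i\mid X_{\le i-g})-H(x_i\mid X_{\le i-1})=\sum_{j=i-g+1}^{i-1} I(x_i;x_j\mid X_{<j})\le (g-1)\varepsilon .
\]
So the standing assumption caps your $\delta(g)$ at $(g-1)\varepsilon$.

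The left side is the expectation of $D_{\mathrm{KL}}\bigl(P(x_i\mid X_{\le i-1})\,\|\,P(x_i\mid X_{\le i-g})\bigr)$. By Pinsker and Jensen, the gap therefore moves position $i$'s left-only confidence by at most $\sqrt{(g-1)\varepsilon/2}$ in expectation.
\begin{itemize}
\item At $g=2$ this is half the margin $2\sqrt{\varepsilon/2}$ that your Step~3 must produce.
\item Since $(g-1)\varepsilon\ll\sqrt{\varepsilon}$ for small $g$, your own closing criterion says flips are possible at every small gap.
\end{itemize}
Locality therefore cannot drive the wavefront under Assumption~\ref{ass:eps_forward}. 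The same bound shows the paper's ``cascade'' is weak too: revealing the frontier token changes the prediction at any later position by at most $\varepsilon$ in expected KL. Any separation between the frontier and its neighbours would have to come from position-intrinsic predictability given text and prompt, which neither the assumption nor your hypothesis controls.

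\textbf{Three smaller problems.}

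First, along the trajectory your induction follows, the revealed target set is a contiguous prefix. No masked position then has a revealed target token on its right, so $P(x_i\mid\tilde X_{<i},\tilde X_{>i})=P(x_i\mid\tilde X_{<i})$ exactly. Step~1, and with it Theorem~\ref{thm:kl_bound}, does no work there, so the $2\sqrt{\varepsilon/2}$ margin is an artifact. Dropping it does not help: the ranking would then hinge on an $O(\sqrt{\varepsilon})$ average effect beating the position-to-position variation of the frontier confidences themselves.

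Second, an entropy gap does not force a confidence gap. The uniform law on four symbols has both lower entropy and lower maximum probability than $(0.3,0.175,0.175,0.175,0.175)$. Your min-entropy/Fano sandwich therefore separates the two confidences only when $H_{\mathrm{front}}$ is itself small; the size of $\delta(g)$ alone decides nothing. Moreover, conditional entropies average over the conditioning while the ranking is per realization, so you would also need a concentration step there.

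Third, Algorithm~\ref{alg:inference} unmasks about $N/T$ tokens per step. You therefore need confidence to be monotone in $g$, not merely the frontier to win.
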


\paragraph{Implications for training.} Theorem~\ref{thm:kl_bound} also suggests that, at high masking rates ($\tau \to 1$), the masked diffusion model's task closely resembles AR prediction---few tokens are revealed, and the left-context advantage dominates. This explains why AR-initialized models converge quickly: the initial weights are already near-optimal for the high-masking regime and only need adaptation for low-masking-rate predictions where more tokens are revealed and bidirectional context becomes useful. The $1/t$-weighting in the training objective (Eq.~\ref{eq:loss}) upweights precisely these low-masking-rate steps, focusing learning capacity where AR initialization provides the least advantage.

\subsection{Inference: Iterative Unmasking}

Generation proceeds from fully masked to fully revealed in $T$ steps (Algorithm~\ref{alg:inference}). At each step: (1) predict logits for all masked positions via one forward pass, (2) sample candidate tokens with temperature-scaled nucleus sampling, (3) compute confidence scores, and (4) unmask the top-$K_k$ most confident predictions, where $K_k$ follows a linear schedule. The total cost is $T$ forward passes, independent of sequence length.

\begin{algorithm}[t]
\caption{LLaDA-TTS Inference}
\label{alg:inference}
\begin{algorithmic}[1]
\REQUIRE Text $\mathbf{t}$, prompt $\mathbf{s}_\text{prompt}$, steps $T$, temperature $\tau$, confidence $\mathcal{C}$
\STATE $\mathbf{s} \leftarrow [\texttt{MASK}]^N$; \quad $\{t_k\} \leftarrow \text{linspace}(1, \epsilon, T{+}1)$
\FOR{$k = 0$ \textbf{to} $T{-}1$}
    \STATE $\hat{\mathbf{y}} \leftarrow \text{ShiftLogits}(\text{LLM}([\texttt{SOS}; \mathbf{t}; \texttt{SEP}; \mathbf{s}_\text{prompt}; \mathbf{s}; \texttt{EOS}]))$
    \FOR{each masked $i$}
        \STATE $\hat{s}_i \sim \text{TopP}(\hat{\mathbf{y}}_i / \tau)$; \quad $c_i \leftarrow \mathcal{C}(\hat{\mathbf{y}}_i)$
    \ENDFOR
    \STATE Unmask top-$\lfloor |\mathcal{M}| \cdot (1 - t_{k+1}/t_k) \rfloor$ positions by confidence
\ENDFOR
\end{algorithmic}
\end{algorithm}

We use top-$k$ margin confidence ($c_i = p_1(i) - p_2(i)$), temperature $\tau{=}0.986$, top-$p{=}0.586$, and confidence temperature $\tau_c{=}0.424$ (selected via 300 Optuna trials).

\subsection{Speech Editing via Selective Masking}
\label{sec:method_editing}

The bidirectional architecture naturally enables speech editing. Given existing speech tokens and a text edit, we: (1)~align text to speech tokens via LLaDA-TTS's attention at a specific head (L16-H2 or L11-H5, identified by systematic evaluation over all layer-head pairs; see Section~\ref{sec:exp_emergent}), (2)~mask the affected region with context margins ($C{=}5$ tokens for substitution, $S{=}3$ for insertion/deletion) for smooth transitions, and (3)~regenerate via iterative unmasking with surrounding tokens frozen. No additional training is required. Figure~\ref{fig:speech_edit} illustrates the full pipeline for an example substitution edit.

Formally, this computes the posterior $P(\mathbf{s}_{\text{edit}} \mid \mathbf{s}_{\text{prefix}}, \mathbf{s}_{\text{suffix}}, \mathbf{t}_{\text{new}})$, conditioning on both prefix and suffix context. Unlike AR models that can only condition on the left, the bidirectional attention of masked diffusion naturally marginalizes over both sides during iterative unmasking---the frozen prefix and suffix tokens provide bidirectional conditioning at every step, yielding coherent boundary transitions without any explicit stitching.

\begin{figure}[t]
    \centering
    \includegraphics[width=0.95\textwidth]{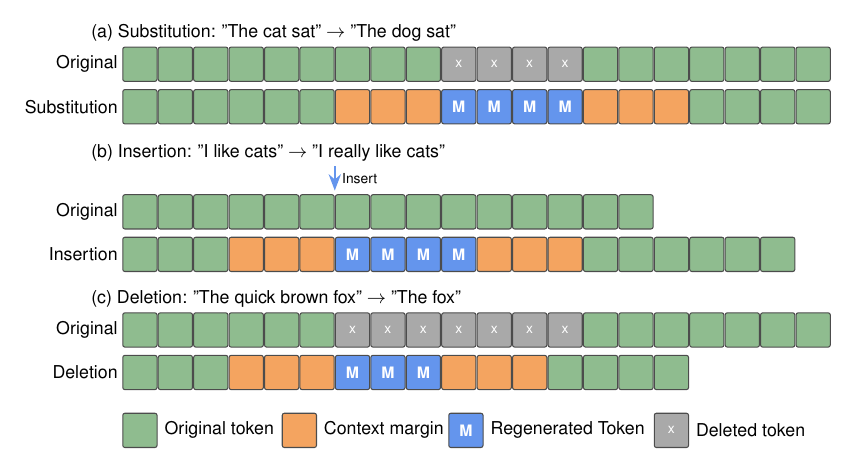}
    \caption{Speech editing pipeline: (1) align text$\to$speech via attention, (2) mask affected region with context margins, (3) regenerate via iterative unmasking. The surrounding tokens remain frozen, providing bidirectional conditioning throughout.}
    \label{fig:speech_edit}
\end{figure}

\section{Experiments}
\label{sec:experiments}

\subsection{Setup}

\paragraph{Benchmark.} We evaluate on Seed-TTS-Eval~\citep{seedtts}: \textbf{test-zh} (2{,}020 Chinese utterances, CER), \textbf{test-zh-hard} (400 adversarial Chinese, CER), and \textbf{test-en} (1{,}088 English, WER).

\paragraph{Metrics.} Character/word error rate (CER/WER) via Whisper-large-v3~\citep{whisper} (English) and Paraformer~\citep{paraformer} (Chinese); speaker similarity (SS) via WavLM-large~\citep{wavlm} cosine similarity.

\paragraph{Baselines.} We compare against representative open-source systems on the Seed-TTS-Eval leaderboard. All baseline numbers are from the leaderboard unless otherwise noted.

\subsection{Main Results}

Table~\ref{tab:main_results} presents the comparison on Seed-TTS-Eval.

\begin{table}[t]
\centering
\caption{Comparison on Seed-TTS-Eval benchmark (open-source leaderboard). SS: speaker similarity (\%, WavLM-based). $\downarrow$: lower is better; $\uparrow$: higher is better.}
\label{tab:main_results}
\small
\begin{tabular}{@{}llcccccc@{}}
\toprule
& & \multicolumn{2}{c}{\textbf{test-zh}} & \multicolumn{2}{c}{\textbf{test-en}} & \multicolumn{2}{c}{\textbf{test-zh-hard}} \\
\cmidrule(lr){3-4}\cmidrule(lr){5-6}\cmidrule(lr){7-8}
\textbf{Model} & \textbf{Type} & \textbf{CER}$\downarrow$ & \textbf{SS}$\uparrow$ & \textbf{WER}$\downarrow$ & \textbf{SS}$\uparrow$ & \textbf{CER}$\downarrow$ & \textbf{SS}$\uparrow$ \\
\midrule
Human & --- & 1.26 & 75.5 & 2.14 & 73.4 & --- & --- \\
Seed-TTS & AR & 1.12 & \textbf{79.6} & 2.25 & \textbf{76.2} & 7.59 & \textbf{77.6} \\
VoxCPM~\citep{voxcpm} & AR & \textbf{0.93} & 77.2 & \textbf{1.85} & 72.9 & 8.87 & 73.0 \\
Index-TTS\,2~\citep{indextts2} & AR & 1.03 & 76.5 & 2.23 & 70.6 & 7.12 & 75.5 \\
MaskGCT & NAR & 2.27 & 77.4 & 2.62 & 71.4 & 10.27 & 74.8 \\
F5-TTS & Flow & 1.52 & 74.1 & 2.00 & 64.7 & 8.67 & 71.3 \\
CosyVoice\,3-0.5B & AR & 1.21 & 78.0 & 2.24 & 71.8 & \textbf{6.71} & 75.8 \\
\midrule
LLaDA-TTS (64 steps) & Diffusion & 0.98 & 74.6 & 1.96 & 71.1 & 7.04 & 73.5 \\
\bottomrule
\end{tabular}
\end{table}

LLaDA-TTS achieves 0.98\% CER on test-zh, outperforming the CosyVoice\,3-0.5B base model (1.21\%) and approaching the strongest AR systems on test-en (1.96\% vs.\ 1.85\% for VoxCPM). It substantially outperforms MaskGCT (the only other NAR baseline) on all metrics. On the adversarial test-zh-hard set, LLaDA-TTS (7.04\%) closely matches the CosyVoice\,3 baseline (6.71\%), demonstrating that masked diffusion can achieve comparable robustness to AR decoding even on challenging inputs such as tongue twisters. Overall, these results confirm that replacing the AR decoder with masked diffusion preserves speech quality across all evaluation conditions.

\subsection{Speed--Quality Tradeoff}
\label{sec:exp_speed}

A key advantage of LLaDA-TTS is controlling quality--speed by adjusting the step count $T$. Figure~\ref{fig:speed_quality} shows both the CER and RTF as a function of denoising steps.

\begin{figure}[t]
    \centering
    \includegraphics[width=0.85\textwidth]{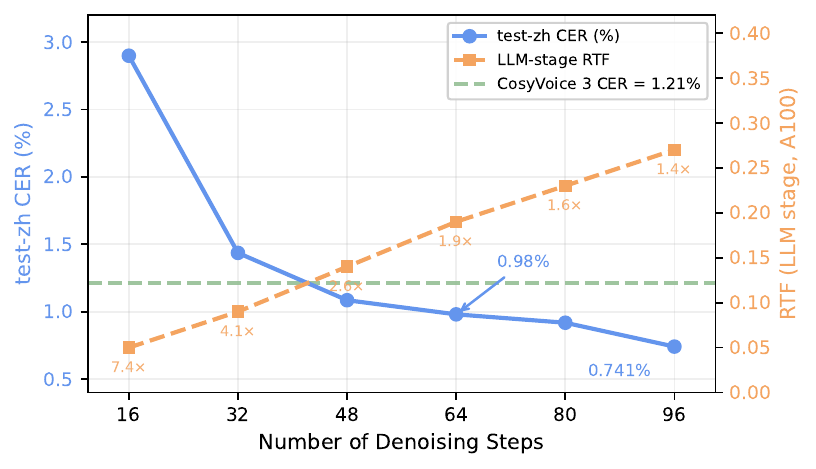}
    \caption{Speed--quality tradeoff. Left axis: test-zh CER (\%) vs.\ denoising steps; right axis: LLM-stage RTF on A100. The dashed green line marks the CosyVoice\,3 AR baseline CER (1.21\%). LLaDA-TTS surpasses the AR baseline at 48 steps (${\sim}2.6{\times}$ speedup) and achieves 0.74\% CER at 96 steps.}
    \label{fig:speed_quality}
\end{figure}

LLaDA-TTS surpasses the AR baseline CER at 48 steps (1.09\% vs.\ 1.21\%, ${\sim}2.6{\times}$ LLM-stage speedup). At 64 steps, it achieves 0.98\% CER with a $1.9{\times}$ speedup; at 96 steps, CER drops further to 0.74\%. Crucially, inference cost is independent of output length: for a 250-token utterance, the AR model requires 250 sequential passes while LLaDA-TTS uses exactly $T$.

\subsection{Ablation: Initialization and Label Shift}
\label{sec:exp_ablation}

Table~\ref{tab:ablation} isolates the contributions of AR initialization and label shift.

\begin{table}[t]
\centering
\caption{Ablation of initialization strategy and label shift on Seed-TTS-Eval mini set at matched training compute.}
\label{tab:ablation}
\small
\begin{tabular}{lccc}
\toprule
\textbf{Configuration} & \textbf{zh CER}$\downarrow$ & \textbf{zh-hard CER}$\downarrow$ & \textbf{en WER}$\downarrow$ \\
\midrule
AR weights only (epoch 0, no training) & 99.97 & 99.94 & 99.97 \\
From scratch & 45.27 & 68.25 & 62.58 \\
AR init, no label shift & 1.45 & 12.33 & 2.30 \\
\midrule
AR init + label shift (full) & 0.98 & 7.04 & 1.96 \\
\bottomrule
\end{tabular}
\end{table}

AR initialization with label shift provides the strongest results, dramatically outperforming training from scratch (0.98\% vs.\ 45.27\% CER). Removing label shift noticeably degrades performance (1.45\% CER, 12.33\% zh-hard), confirming that the shifted prediction convention is the critical bridge enabling knowledge transfer from AR pretraining.

These results empirically validate Theorem~\ref{thm:kl_bound}: the $\varepsilon$-forward dependence of acoustic tokens means that the AR model's left-to-right predictions are already close to optimal for the bidirectional masked prediction task, enabling rapid convergence when combined with the masked diffusion objective. Label shift further aligns the weight geometry so that the pretrained output projection can be efficiently adapted to the new bidirectional regime. Without label shift, the output projection must re-learn the prediction mapping from misaligned hidden representations, explaining the performance gap.

\subsection{Emergent AR-like Behavior and Alignment}
\label{sec:exp_emergent}

\paragraph{AR-like unmasking order.} Despite fully bidirectional attention, the unmasking process exhibits a predominantly left-to-right progression (Figure~\ref{fig:unmasking}). Tokens near the utterance start are resolved first, with a ``wavefront'' sweeping rightward---closely resembling AR decoding. However, the ordering is not strictly sequential: high-confidence positions (prosodic boundaries, silence) are resolved earlier than their position predicts, while ambiguous tokens are deferred. This indicates that LLaDA-TTS inherits sequential priors from AR pretraining but modulates them with confidence-based lookahead, complementing sequential generation with global bidirectional context.

Theorem~\ref{thm:kl_bound} provides a principled explanation: the $\varepsilon$-forward dependence of acoustic tokens ensures that left-context-only predictions are near-optimal. During early unmasking stages, positions with small indices benefit from the always-visible text and prompt prefix, yielding low KL divergence and high confidence---hence early unmasking. As these early tokens become revealed, they provide additional left context for subsequent positions, creating a cascading wavefront effect (Corollary~\ref{cor:wavefront}). The deviations from strict left-to-right order arise precisely when bidirectional context does improve prediction---e.g., at prosodic boundaries where surrounding revealed tokens provide strong bilateral constraints.

\begin{figure}[t]
    \centering
    \includegraphics[width=0.95\textwidth]{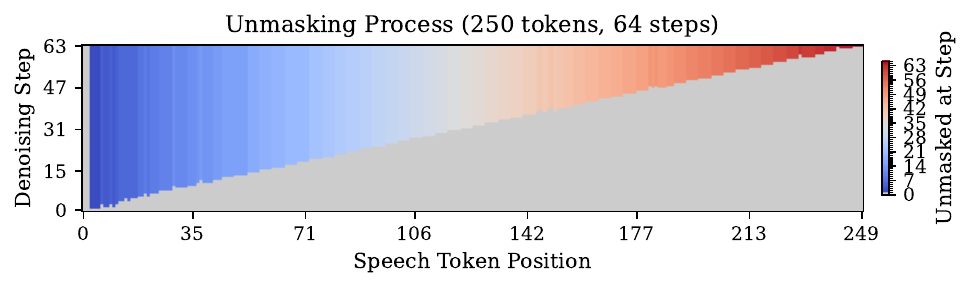}
    \caption{Unmasking process for a Chinese utterance (64 steps). Each column is a speech token position; color indicates the step at which the token was unmasked (blue=early, red=late, gray=still masked). The generation exhibits a predominantly left-to-right progression with confidence-based deviations.}
    \label{fig:unmasking}
\end{figure}

\paragraph{Attention-based alignment.} We discover that, after masked diffusion fine-tuning, certain attention heads in LLaDA-TTS spontaneously develop precise text-to-speech alignment. Figure~\ref{fig:alignment_viz} (left) visualizes the head-averaged attention map for six evenly spaced layers: most layers show diffuse attention, but layer~11 produces a sharp diagonal pattern indicating monotonic text-to-speech correspondence. Figure~\ref{fig:alignment_viz} (right) further dissects layer~11 by showing each of the first 6 individual heads: only H1 and H5 exhibit clean diagonal alignment, while other heads show no such structure. Evaluated against MMS forced-alignment ground truth~\citep{mms}, attention-based alignment achieves MAE\,=\,1.29 tokens ($\approx$52\,ms at 25\,Hz), substantially outperforming proportional alignment (MAE\,=\,2.99).

\begin{figure}[t]
    \centering
    \begin{subfigure}[t]{0.48\textwidth}
        \centering
        \includegraphics[width=\textwidth]{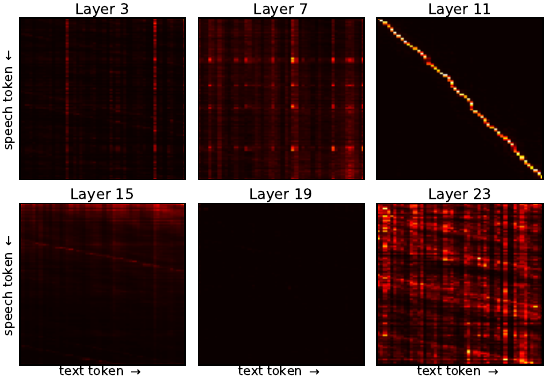}
        \caption{Head-averaged attention for selected layers. Layer~11 shows sharp diagonal alignment.}
    \end{subfigure}
    \hfill
    \begin{subfigure}[t]{0.48\textwidth}
        \centering
        \includegraphics[width=\textwidth]{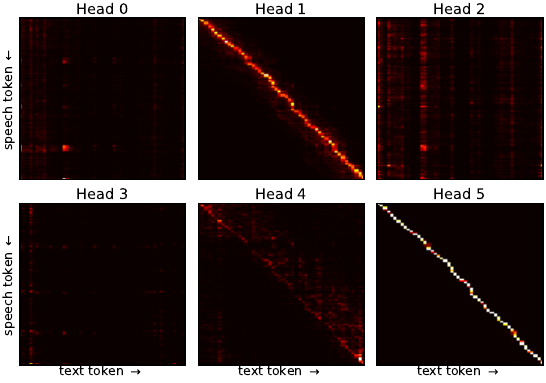}
        \caption{Individual heads (H0--H5) of layer~11. H1 and H5 develop monotonic alignment.}
    \end{subfigure}
    \caption{Emergent text-to-speech alignment in LLaDA-TTS. Axes: text tokens ($\rightarrow$) vs.\ speech tokens ($\downarrow$).}
    \label{fig:alignment_viz}
\end{figure}

\subsection{Speech Editing}
\label{sec:exp_editing}

We evaluate LLaDA-TTS's zero-shot speech editing on the Ming-Freeform-Audio-Edit-Benchmark~\citep{mingbenchmark}, which covers three editing operations: insertion, deletion, and substitution. We measure word error rate (WER) on the full edited utterance via Whisper-large-v3, speaker similarity (SIM) via WavLM cosine similarity between the edited and original speech, and DNSMOS for perceptual quality. We compare against VoiceCraft~\citep{voicecraft} (AR, requires specialized editing training) and Ming-UniAudio~\citep{minguniaudio} (Audio-LLM).

\section{Conclusion}
\label{sec:conclusion}

We presented LLaDA-TTS, which replaces the autoregressive LLM decoder in LLM-based TTS with a masked diffusion model. By fine-tuning from AR weights with bidirectional attention and a $1/t$-weighted masked prediction objective, LLaDA-TTS achieves competitive speech quality at 64 steps while decoupling cost from output length. The same architectural change that enables acceleration also enables native speech editing---speed and editability are two manifestations of the same bidirectional masked design.

Two emergent findings deepen our understanding: (1)~the diffusion model develops AR-like generation dynamics, retaining sequential priors from pretraining while adding confidence-based flexibility; (2)~precise attention alignment emerges at specific layer-head pairs without supervision. Our theoretical analysis (Theorem~\ref{thm:kl_bound}) provides a principled explanation: the $\varepsilon$-forward dependence of acoustic tokens ensures that AR predictions are provably close to the optimal bidirectional posterior, explaining both the effectiveness of initialization and the emergent left-to-right dynamics. Together, these findings suggest that the transition from AR to masked diffusion preserves and augments the structural knowledge learned during AR pretraining.

\paragraph{Generalizability.} Our method modifies only the attention mask and loss---leaving tokenizer, prompt format, and acoustic model unchanged. Any LLM-based AR TTS system using discrete tokens (VALL-E, Seed-TTS, Spark-TTS, etc.) could be similarly converted, inheriting the benefits of step-independent cost and native editing.

\paragraph{Limitations.} (1) LLaDA-TTS requires specifying output length in advance (via a token-text ratio). (2) The current implementation is non-streaming.

\paragraph{Future work.} Promising directions include targeted post-training for difficult test sets, step distillation below 32 steps, streaming via prefix-priority unmasking, and scaling to larger backbones and datasets.

\bibliographystyle{plainnat}
\bibliography{references}

\end{document}